\documentclass[12pt]{article}
\usepackage{amsmath}
\usepackage{amssymb}
\usepackage[a4paper]{geometry}
\usepackage{amsthm}
\usepackage{subcaption}
\usepackage{graphicx}
\usepackage{tikz}
\usepackage{tikz-cd}
\usepackage{mathrsfs}
\usepackage{bm}
\usepackage{slashed}
\usepackage[hidelinks]{hyperref}
\usepackage{listings}

\hypersetup{
    colorlinks = true,
    linkcolor = {black},
    citecolor = {blue},
    anchorcolor = {green},
    citecolor = {red},
    filecolor = {cyan},
    menucolor = {red},
    runcolor = {cyan},
    urlcolor = {magenta}
}
\oddsidemargin=0.2cm
\newcommand{\twidth}{6in}
\textwidth=\twidth
%





\renewcommand{\O}{{\mathrm{O}}}
\newcommand{\SU}{{\mathrm{SU}}}
\newcommand{\SO}{{\mathrm{SO}}}


\newcommand{\R}{{\mathbb{R}}}

\newcommand{\Z}{{\mathbb{Z}}}

\newcommand{\C}{{\mathbb{C}}}
\renewcommand{\H}{{\mathbb{H}}}

\newcommand{\1}{{\mathbb{1}}}

\newcommand{\bu}{{\bm{1}}}
\newcommand{\bi}{{\bm{i}}}
\newcommand{\bj}{{\bm{j}}}
\newcommand{\bk}{{\bm{k}}}
\newcommand{\beq}{\begin{equation}}
\newcommand{\eeq}{\end{equation}}
\newcommand{\bea}{\begin{eqnarray}}
\newcommand{\eea}{\end{eqnarray}}
\newcommand{\bal}{\begin{align}}
\newcommand{\eal}{\end{align}}
\newcommand{\bml}{\begin{multline}}
\newcommand{\eml}{\end{multline}}

\newcommand{\bdy}{\partial}

\newcommand{\wt}{\widetilde}

\newcommand{\lto}{\longrightarrow}

\newcommand{\im}{{\rm im}\, }

\newcommand{\id}{{\rm Id}}

\newcommand{\diag}{{\rm diag}}



\makeatletter
\newcommand\xleftrightarrow[2][]{%
  \ext@arrow 9999{\longleftrightarrowfill@}{#1}{#2}}
\newcommand\longleftrightarrowfill@{%
  \arrowfill@\leftarrow\relbar\rightarrow}
\makeatother

\newcommand{\bigslant}[2]{{\raisebox{.2em}{$#1$}\left/\raisebox{-.2em}{$#2$}\right.}}

\newtheorem{theorem}{Theorem}

\begin{document}
\renewcommand*{\thefootnote}{\fnsymbol{footnote}}
\begin{titlepage}
\begin{center}
{\Large
Finkelstein--Rubinstein constraints\\ from ADHM data and rational maps \par}


\vspace{10mm}
{\Large Josh Cork\footnote{Email address: \texttt{josh.cork@leicester.ac.uk}}$^{\rm 1}$\ and Derek Harland\footnote{Email address: \texttt{d.g.harland@leeds.ac.uk}}$^{\rm 2}$}\\[10mm]

\noindent {\em ${}^{\rm 1}$ School of Computing and Mathematical Sciences\\
University of Leicester, University Road, Leicester, United Kingdom
}\\
\smallskip
\noindent {\em ${}^{\rm 2}$ School of Mathematics, University of Leeds\\ Woodhouse Lane, Leeds, United Kingdom }\\[10mm]
{\Large \today}
\vspace{15mm}
\begin{abstract}
    We establish simple formulae for computing Finkelstein--Rubinstein signs for Skyrme fields generated in two ways: from instanton ADHM data, and from rational maps. This may be used to compute homotopy classes of general loops in the configuration spaces of skyrmions, and as a result provide a useful tool for a quantum treatment beyond rigid-body quantisation of skyrmions.
\end{abstract}
\end{center}
\end{titlepage}
\renewcommand*{\thefootnote}{\arabic{footnote}}
\setcounter{footnote}{0}
\hypersetup{
    linkcolor = {blue}
}
\numberwithin{equation}{section}
\section{Introduction}
The Skyrme model \cite{skyrme1962nucl} is a nonlinear model of nuclei, where baryons are interpreted as topological soliton solutions to static field equations called skyrmions. The model is recognised as an effective model of QCD \cite{Witten1983baryons}; since extracting properties of nuclei directly from QCD is beyond current computational capabilities, the Skyrme model acts as a simpler and more tractable alternative.

In this model pions are encoded by maps $U:\R^3\to\SU(2)\cong S^3$ which are continuous and satisfy $U\to\id$ as $|{\bm x}|\to\infty$; these are called Skyrme fields. The space $S$ of all Skyrme fields can be identified with the space of based continuous maps from $S^3$ to $S^3$ via the one-point-compactification $S^3\cong\R^3\cup\{\infty\}$. Each Skyrme field has a topological degree $N\in\pi_3(S^3)\cong\Z$, physically identified as the baryon number, and this splits the configuration space into distinct connected components $S_N$, labelled by topological degree.

In order to make contact with real nuclear physics, one must quantise the model. A key step in this process is to constrain the wave function, which is a map $\Psi:\wt{S_N}\to\C$ defined on the universal cover $\wt{S_N}$ of $S_N$. The space $\wt{S_N}$ admits an action of the fundamental group $\pi_1(S_N)\cong\Z_2=\{\pm1\}$, and the wavefunction is required to satisfy
\begin{equation}\label{FR1}
\Psi((-1)\cdot \wt{U})=-\Psi(\wt{U})
\end{equation}
for all points $\wt{U}\in\wt{S_N}$ \cite{FinkelsteinRubinstein1968,Witten1983baryons}.  In practice, quantisation is often carried out on a subset $C\subset S_N$, lifted to $\wt{C}\subset \wt{S_N}$.  In this case, for closed loops $U^t$ in $C$ parametrised by $t\in [0,1]$ and satisfying $U^1=U^0$, one needs to calculate the corresponding element of $\pi_1(S_N)$, denoted $\sigma(U^t)=\pm1$.  The wavefunction along the lift $\wt{U^t}$ of the loop in $\wt{S_N}$ is required to satisfy
\begin{equation}\label{FR2}
\Psi(\wt{U^1})=\sigma(U^t)\Psi(\wt{U^0}).
\end{equation}
Constraints of the form \eqref{FR2} are known as Finkelstein--Rubinstein constraints \cite{FinkelsteinRubinstein1968}.  The sign $\sigma(U^t)\in\pi_1(S_N)$ is called the Finkelstein--Rubinstein sign, or sometimes the Finkelstein--Rubinstein phase. Finkelstein—-Rubinstein constraints play a vital role in quantisation. They ensure that the nucleon has odd spin and Fermi exchange statistics \cite{FinkelsteinRubinstein1968}, and they are used to calculate spin and isospin quantum numbers of nuclei \cite{BattyeMantonSutcliffeWood2009light,krusch2003homotopy,LauManton2014,mankomantonwood2007,MantonWood2006lithium}.

Various methods for calculating Finkelstein--Rubinstein signs are already known, applicable to Skyrme fields of particular forms.  The simplest way to construct a degree $N$ Skyrme field is to take a product of $N$ degree $1$ Skyrme fields.  This is known as the product approximation.  The Finkelstein--Rubinstein sign of a path generated in the product approximation by rotating and permuting the centres of the degree $1$ Skyrme fields can be calculated explicitly using the original methods of Finkelstein and Rubinstein \cite{FinkelsteinRubinstein1968}.  This process can be automated using an algorithm presented in \cite{GillardHarlandKirkElliotMaybeeSpeight2017}.

Unfortunately, energy-minimising skyrmions rarely resemble products of degree $1$ Skyrme fields.  A more sophisticated approximation, which does accurately describe several minimal-energy Skyrme fields, is the rational map approximation \cite{HoughtonMantonSutcliffe1998rational}.  This constructs a Skyrme field as a suspension of a rational map from $S^2$ to $S^2$.  The calculation of Finkelstein--Rubinstein signs within the rational map approximation was addressed in \cite{krusch2003homotopy}.  In particular, a simple method was developed for calculating Finkelstein--Rubinstein signs of paths induced by rotations of the domain and target.  This method has been extended in \cite{krusch2006finkelstein} and used extensively in rigid body quantisation \cite{BattyeMantonSutcliffeWood2009light,LauManton2014,mankomantonwood2007,MantonWood2006lithium}.  The methods of \cite{krusch2003homotopy,krusch2006finkelstein} are not as easily applied to more general paths of rational maps, because they entail finding roots of degree $N$ polynomials.

Several recent papers on the quantisation of skyrmions have made use of families of Skyrme fields which are not induced by rotations \cite{Gudnason:2018aej,Halcrow:2015rvz,Halcrow:2016spb,Rawlinson:2017rcq, SpeightWinyard2023nudged}.  This motivates the search for more sophisticated methods to approximate Skyrme fields and calculate Finkelstein--Rubinstein signs.  The Atiyah--Manton approximation \cite{AtiyahManton1989} is a very powerful method to generate Skyrme fields which accurately describes numerous low-energy skyrmions \cite{corkhalcrow2022adhm}.  It has fruitfully been used for quantising of skyrmions \cite{Halcrow:2022bxw,Leese:1994hb,walet1996quantising}.  This approximation generates Skyrme fields from Yang-Mills instantons, which are in turn easily produced using the ADHM construction \cite{ADHM1978construction,christWeinbergStanton1978general,MantonSutcliffe2004}.

In this paper we present a method to calculate Finkelstein--Rubinstein signs for paths of Skyrme fields induced from ADHM data.  We also present a new way to calculate Finkelstein--Rubinstein signs for paths generated using the rational map approximation.  Both methods are very simple and widely applicable.  They are not restricted to paths generated by rotations, and they do not require factorisation of polynomials.  The new methods are derived in the next two sections.  The following sections illustrate the methods in some examples and present some concluding remarks.

\section{Instantons and ADHM data}
The Atiyah-Manton approximation constructs Skyrme fields from instantons on $\R^4$.  The most powerful way of creating instantons is the ADHM construction \cite{ADHM1978construction,christWeinbergStanton1978general,MantonSutcliffe2004}, which produces all instantons on $\R^4$ using just quaternionic linear algebra.  It begins with a pair $(L,M)$ of quaternionic matrices $L\in\mathrm{Mat}_{1\times N}(\H)$ and $M\in\mathrm{Mat}_{N\times N}(\H)$, with $M$ symmetric.  From these, we build the operator
\begin{align}
\Delta_x=\begin{pmatrix}L\\M-x\id_N\end{pmatrix},
\end{align}
in which $x\in\H$ represents a point in $\R^4$ via
\begin{align}
    x=x_1\bm{i}+x_2\bm{j}+x_3\bm{j}+x_4\bm{1}.
\end{align}
We assume that $\Delta_x^\dagger \Delta_x$ is invertible for all $x$; then the kernel of $\Delta_x$ is spanned by a quaternionic column vector $V_x$ which can be normalised so that $V_x^\dagger V_x=1$.  From this, one can construct the induced connection
\begin{equation}
A_\mu = V_x^\dagger \frac{\bdy}{\bdy x^\mu}V_x
\end{equation}
on the kernel of $\Delta_x^\dagger$.  This connection is a linear combination of $\bm{i},\bm{j},\bm{k}$, which generate the Lie algebra $\mathfrak{su}(2)$.  It solves the self-dual Yang--Mills equation provided that the $N\times N$ matrix $\Delta_x^\dagger \Delta_x$ is real.

It is easy to see show that the connection $A_\mu$ is unchanged by the transformation
\begin{align}\label{gauge-transformation}
    (L,M)\mapsto (LP^{-1},P MP^{-1}),\quad P\in\O(N).
\end{align}
Denote by $D_N$ the set of all pairs $(L,M)$ satisfying condition that $\Delta_x^\dagger\Delta_x$ is real and invertible, and let $I_N=\bigslant{D_N}{\O(N)}$ be the moduli space of ADHM data.  The above-described construction identifies $I_N$ with the moduli space of framed self-dual Yang--Mills instantons on $\R^4$ with charge $N$.

Atiyah--Manton \cite{AtiyahManton1989} proposed an approximate description of skyrmions by taking holonomy of instantons.  More precisely, a Skyrme field $U(\bm{x})=\Omega(\bm{x},\infty)$ is given by solving
\begin{equation}
\frac{\partial}{\partial x_4}\Omega(\bm{x},x_4)=\Omega(\bm{x},x_4)A_4(\bm{x},x_4),\quad \Omega(\bm{x},-\infty)=\bm{1}.
\end{equation}
This will satisfy the boundary condition $U(\bm{x})\to\bm{1}$ as $|\bm{x}|\to\infty$ provided that $V_x$ is chosen to satisfy $V_x\to (\bu,0,\ldots,0)^t$ as $|x|\to\infty$.

Now suppose that $(L^t,M^t)$ describes a loop in $I_N$, parameterised by $t\in[0,1]$.  This satisfies
\begin{align}\label{loop-cgt}
    (L^1,M^1)=(L^0P^{-1},P M^0P^{-1}).
\end{align}
for some $P\in \O(N)$.  The Atiyah--Manton construction associates to this a loop $U^t$ in $S_N$.  The aim of this section is to prove the following simple method to extract the Finkelstein--Rubinstein sign $\sigma(U^t)\in\pi_1(S_N)$ for this loop.
\begin{theorem}\label{thm:ADHM-homotopy}
    Let $(L^t,M^t)\in I_N$ be a loop of ADHM matrices satisfying \eqref{loop-cgt} as above, and let $U^t$ be the corresponding loop in the space of Skyrme fields. The Finkelstein--Rubinstein sign of this loop is given by
    \begin{align}
        \sigma(U^t)=\det P.
    \end{align}
\end{theorem}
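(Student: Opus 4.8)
My plan is to read the sign off from the bundle structure underlying the moduli space. Assume (on the generic locus where the action is free) that $p:D_N\to I_N$ is a principal $\O(N)$-bundle, and recall that the Atiyah--Manton construction is gauge invariant, so it descends to a map $\alpha:I_N\to S_N$ with $U^t=\alpha(\gamma(t))$, where $\gamma$ is the given loop in $I_N$. Since $S_N$ is a component of $\Omega^3S^3$ we have $\pi_1(S_N)\cong\pi_4(S^3)\cong\Z_2$, and by definition the Finkelstein--Rubinstein sign is $\sigma(U^t)=\alpha_*[\gamma]$. The lift $\tilde\gamma$ of $\gamma$ to $D_N$ starting at $(L^0,M^0)$ ends at the $P$-image of $(L^0,M^0)$, so by the long exact sequence of $p$ the class $[\gamma]$ has image $[P]\in\pi_0(\O(N))\cong\Z_2$ under the connecting homomorphism $\partial$, and this image is exactly $\det P$. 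The theorem is therefore the identity $\alpha_*=\partial$ between two homomorphisms $\pi_1(I_N)\to\Z_2$, and because the target is $\Z_2$ it suffices to check that they have the same kernel together with agreement on a single nontrivial class.

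First I would treat the case $\det P=+1$, i.e.\ $P\in\SO(N)$. Using that $\SO(N)$ is connected, choose a path $Q(t)$ in $\SO(N)$ from $\id$ to $P$ and set $\beta(t)=\tilde\gamma(t)\cdot Q(t)^{-1}$. Then $\beta$ is a genuine \emph{loop} in $D_N$ projecting to $\gamma$, and gauge invariance of Atiyah--Manton gives $U^t=\tilde\alpha(\beta(t))$ where $\tilde\alpha=\alpha\circ p$. Hence $\sigma(U^t)$ lies in the image of $\tilde\alpha_*:\pi_1(D_N)\to\pi_1(S_N)$, and it remains to show this image is trivial. I expect this to follow from simple-connectivity of $D_N$: the data space is the complement, in a quaternionic-affine space cut out by the reality condition on $\Delta_x^\dagger\Delta_x$, of the locus where $\Delta_x^\dagger\Delta_x$ degenerates, which should have real codimension at least three, so that a loop in $D_N$ bounds a disc whose Atiyah--Manton image is a null-homotopy of $U^t$. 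Establishing this codimension bound (or, failing that, contracting the image loop directly) is the step I expect to be the main obstacle.

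It then remains to exhibit one loop with $\det P=-1$ and $\sigma=-1$; by the homomorphism property this forces $\alpha_*=\partial$ on all of $\pi_1(I_N)$. The natural base case is a single skyrmion, $N=1$, with the reflection $P=-1\in\O(1)$, for which the constraint \eqref{loop-cgt} reads $(L^1,M^1)=(-L^0,M^0)$. Tracking the normalised kernel vector under a gauge transformation shows $V^1_x=R_PV^0_x$ with $R_P=\diag(1,P)$, so the flip $L^0\mapsto -L^0$ is a rotation of the $\SU(2)$ framing by $-\bu$; this loop is precisely the one implementing a $2\pi$ rotation of the single instanton, whose Finkelstein--Rubinstein sign is the classical value $-1$. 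Carrying out this identification carefully---showing that the $P=-1$ twist generates the nontrivial class of $\pi_4(S^3)$ rather than the trivial one---is the secondary point requiring care; once it is in hand, combining the two cases yields $\sigma(U^t)=\det P$ in general.
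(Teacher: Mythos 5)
Your overall strategy---set up the connecting homomorphism $\partial:\pi_1(I_N)\to\pi_0(\O(N))\cong\Z_2$ of the fibration $D_N\to I_N$, observe that it sends $[\gamma]$ to $\det P$, and then prove it coincides with the homomorphism induced by the Atiyah--Manton map---is the same skeleton as the paper's argument. But there are two genuine gaps in how you execute it. First, your treatment of the $\det P=+1$ case rests on the unproven assertion that $D_N$ is simply connected (or at least that $\tilde\alpha_*$ kills $\pi_1(D_N)$). Your heuristic is not quite right even as a description of the space: the reality condition on $\Delta_x^\dagger\Delta_x$ is quadratic in $(L,M)$, so $D_N$ is an open subset of a quadric variety, not of an affine space, and a codimension count for the degenerate locus inside that variety is exactly the hard content you would need. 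The paper avoids this entirely by citing Hurtubise's theorem that $\pi_1(I_N)\cong\Z_2$; once that is known, $\partial$ and $\alpha_*$ are two homomorphisms $\Z_2\to\Z_2$, and it suffices to show both are surjective---the $\det P=+1$ case then requires no separate argument, since two surjective homomorphisms $\Z_2\to\Z_2$ are both the identity. If you do not want to invoke Hurtubise, you must actually prove your simple-connectivity claim, which you have flagged but not done.

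Second, and more seriously, your nontrivial example lives only in $I_1$. The theorem is a statement about $\pi_1(I_N)$ for each $N$ separately, and the two homomorphisms you are comparing are defined on $\pi_1(I_N)$; agreement for $N=1$ says nothing about $N\geq 2$ without an additional naturality or embedding argument (e.g.\ adjoining $N-1$ well-separated instantons to your $2\pi$-isorotation loop and identifying the resulting Skyrme loop as an isorotation of one of $N$ separated skyrmions---which you do not carry out, and which would require its own justification that the added instantons do not change the homotopy class). The paper supplies, for each $N\geq 2$, the explicit 't Hooft-ansatz loop \eqref{loop-adhm-swap} with $\det P=-1$, whose Atiyah--Manton image is the loop exchanging two well-separated unit skyrmions; that exchange loop is the classical generator of $\pi_1(S_N)$ by Finkelstein--Rubinstein, which gives surjectivity of both homomorphisms at once. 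Your $N=1$ computation is correct as far as it goes (the $P=-1$ twist is a $2\pi$ isorotation, with sign $-1$), but as written it only proves the theorem for $N=1$.
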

\begin{proof}
    It is known by a result of Hurtubise \cite{hurtubise1986instantons} that $\pi_1(I_N)=\Z_2$. From this it follows that the universal cover $\widetilde{I_N}\to I_N$ is two-to-one.  There is another natural two-to-one cover of $I_N$ given by 
\begin{align}
\Z_2\lto\bigslant{D_N}{\SO(N)}\lto I_N.
\end{align}
The associated exact sequence of homotopy groups includes a homomorphism
\begin{align}\label{HE}
\pi_1(I_N)\lto\pi_0(\Z_2)\cong\Z_2.
\end{align}
This homomorphism $\pi_1(I_N)\to\Z_2$ is precisely the map $[(L^t,M^t)]\mapsto\det P$, where $(L^t,M^t)$ is the loop satisfying \eqref{loop-cgt}.  On the other hand, the Atiyah--Manton construction gives rise to a map from $I_N$ to the space $S_N$ of Skyrme configurations, and hence a homomorphism
\begin{equation}\label{AM}
\pi_1(I_N)\lto \pi_1(S_N)\cong\Z_2.
\end{equation}
So now we have two natural homomorphisms $\pi_1(I_N)\to\Z_2$, and we need to show that these two homomorphisms agree.

To do so, consider the following family of ADHM matrices:
\begin{align}
    \begin{aligned}
        L^t&=\begin{pmatrix}
            1&1&\cdots&1
        \end{pmatrix},\\
        M^t&=\diag\left\{
            \cos(\pi t)\bi+\sin(\pi t)\bj,-\cos(\pi t)\bi-\sin(\pi t)\bj,2\bi,\dots,(N-1)\bi\right\}.
    \end{aligned}\label{loop-adhm-swap}
\end{align}
It is straightforward to check that these satisfy the ADHM constraints, namely that $\Delta_x^\dagger \Delta_x$ is real and invertible for all $x$.  They moreover satisfy
\begin{align*}
    (L^1,M^1)=(L^0P^{-1},PM^0P^{-1}),\quad P=\left(\begin{array}{c|c}
        \begin{array}{cc}0&1\\1&0\end{array}&0\\\hline
        0&\id_{N-2}\end{array}\right).
\end{align*}
Since $\det P=-1$, it follows that the homomorphism \eqref{HE} is surjective.

The family of connections determined by the data \eqref{loop-adhm-swap} may be written explicitly in the 't Hooft ansatz describing $N$ instantons with the same scales and orientations, with positions given by the diagonal components of $M$. Since the positions were chosen inside $\R^3\cong\im(\H)$, the corresponding Skyrme field has precisely the same physical interpretation. In this way, the family \eqref{loop-adhm-swap} generates a loop in the configuration space of Skyrme fields which swaps the positions of two individual skyrmions. This loop is known to be a generator of $\pi_1(S_N)=\Z_2$ \cite{FinkelsteinRubinstein1968}. So both of the homomorphisms $\pi_1(I_N)\to\Z_2$ are surjective, and they therefore must agree as the only surjective homomorphism $\Z_2\to\Z_2$ is the identity.
\end{proof}
\section{Rational maps}
The rational map approximation \cite{HoughtonMantonSutcliffe1998rational} is a popular method of writing down Skyrme fields.  A rational map $R:\mathbb{CP}^1\to\mathbb{CP}^1$ of degree $N$ can be written
\begin{equation}
R(z)=\frac{p(z)}{q(z)},
\end{equation}
where $p=\sum_{i=0}^N p_iz^i$ and $q(z)=\sum_{i=0}^N q_iz^i$ are two complex polynomials satisfying 
\begin{equation}\label{resultant}
\det\begin{pmatrix}p_0&\cdots&\cdots&p_N&&&\\
    &p_0&\cdots&\cdots&p_N&&&\\
    &&\ddots&\cdots&\cdots&\cdots&&\\
    &&&p_0&\cdots&\cdots&\cdots&p_N\\
    q_0&\cdots&\cdots&q_{N-1}&q_N&&&\\
    &q_0&\cdots&\cdots&q_{N-1}&q_N&&\\
    &&\ddots&\cdots&\cdots&\ddots&\ddots&\\
    &&&q_0&\cdots&\cdots&q_{N-1}&\,q_N
    \end{pmatrix}
=:\mathrm{Res}(p,q)\neq 0.
\end{equation}
The Skyrme field $U$ associated to a rational map $R$ takes the form
\begin{equation}
\label{ratmapapprox}
U(\mathbf{x}) = \exp(i\,f(r)\,n^j(z)\sigma_j).
\end{equation}
Here $\sigma_j$ are the Pauli matrices, $r=|\mathbf{x}|$ and $z\in\mathbb{C}$ is a stereographic coordinate defined by $z=(x^1+i x^2)/(r+x^3)$.  The $S^2$-valued function $\mathbf{n}$ is constructed from $R$ using inverse stereographic projection:
\begin{equation}
\mathbf{n}(z) = \left( \frac{2\Re(R(z))}{1+|R(z)|^2},\,\frac{2\Im(R(z))}{1+|R(z)|^2},\,\frac{1-|R(z)|^2}{1+|R(z)|^2}\right).
\end{equation}
Finally, the profile function $f:[0,\infty]\to[0,\pi]$ is required to satisfy $f(0)=\pi$, $f(\infty)=0$ and in practice is usually chosen to minimise an energy.  Suitable choices of rational maps $R$ lead to good approximations to minima of the Skyrme energy \cite{HoughtonMantonSutcliffe1998rational}.

Now suppose that we have a loop $R^t(z)=p^t(z)/q^t(z)$ in the space of rational maps, parametrised by $0\leq t\leq 1$ and satisfying $p^1(z)=p^0(z)$ and $q^1(z)=q^0(z)$.  This induces a loop $U^t(\mathbf{x})$ in the space of Skyrme fields.  The problem of computing the Finkelstein--Rubinstein sign $\sigma(U^t)$ of such a loop was first considered in \cite{krusch2003homotopy}.  Here we present a new method to compute this sign:
\begin{theorem}
Let $p^t,q^t,R^t,U^t$ be as above.  Let $\gamma$ be the loop in $\mathbb{C}^\ast=\mathbb{C}\setminus\{0\}$ given by $\gamma(t)=\mathrm{Res}(p^t,q^t)$.  Let $w(\gamma)\in\mathbb{Z}$ be the winding number of $\gamma$ in $\pi_1(\mathbb{C}^\ast)\cong\mathbb{Z}$.  Then the Finkelstein--Rubinstein sign of the loop $U^t$ of Skyrme fields is given by
\begin{equation}\label{rat map sign}
\sigma(U^t) = (-1)^{w(\gamma)}.
\end{equation}
\end{theorem}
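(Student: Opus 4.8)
The plan is to follow the strategy of Theorem~\ref{thm:ADHM-homotopy} almost verbatim: to exhibit $\sigma(U^t)$ and $(-1)^{w(\gamma)}$ as two homomorphisms out of the fundamental group of a single configuration space, and to show they agree by evaluating both on a generator. First I would make that space explicit. Since $\Res(p,q)\neq0$ holds exactly when $p$ and $q$ share no common root in $\CP^1$ (a common root at infinity corresponding to $p_N=q_N=0$), the loops in the theorem are precisely loops in the affine variety $V_N=\{(p,q):\Res(p,q)\neq0\}=\C^{2N+2}\setminus H$, with $H=\{\Res=0\}$. The ansatz~\eqref{ratmapapprox} defines a continuous map $\Phi\colon V_N\to S_N$ (the scaling $(p,q)\mapsto(\lambda p,\lambda q)$ cancels in $R$, hence in $U$), so $\sigma=\Phi_*\colon\pi_1(V_N)\to\pi_1(S_N)\cong\Z_2$, while $\gamma=\Res\circ(\,\cdot\,)$ realises $w$ via $\Res_*\colon\pi_1(V_N)\to\pi_1(\C^{*})\cong\Z$. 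Both $\sigma$ and $(-1)^{w}$ are therefore homomorphisms $\pi_1(V_N)\to\Z_2$, and both factor through $H_1(V_N)$ because $\Z_2$ is abelian.

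The structural heart of the argument is that $\Res$ is irreducible as a polynomial in the coefficients $(p_i,q_j)$. I would prove this by studying the incidence variety $\Sigma=\{(p,q,\zeta)\in\C^{N+1}\times\C^{N+1}\times\CP^1:p(\zeta)=q(\zeta)=0\}$: for fixed $\zeta$ the two vanishing conditions are independent and linear, so $\Sigma$ is a rank-$2N$ bundle over $\CP^1$, hence irreducible of dimension $2N+1$; its image is $H$ and the projection is generically injective (a generic point of $H$ has a single common root), so $H$ is an irreducible hypersurface vanishing to order one (a simple common root gives a simple zero of $\Res$). It follows that $H_1(V_N)\cong\Z$, generated by a meridian $\mu$ of $H$ with $\Res_*(\mu)=1$; hence $w$ is a complete homological invariant and $(-1)^{w}$ is the \emph{unique} nontrivial homomorphism $H_1(V_N)\cong\Z\to\Z_2$.

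It then remains only to check that $\sigma$ is nontrivial, since any nontrivial homomorphism out of the cyclic group $H_1(V_N)\cong\Z$ necessarily sends $\mu\mapsto-1$ and therefore coincides with $(-1)^{w}$. For this I would identify the meridian geometrically: it is the loop in which one simple zero of $R$ makes a single circuit of one simple pole (the generic way to cross and re-cross $H$; holding leading coefficients fixed, $\Res$ is a nonzero constant times $\prod_{q(\beta)=0}p(\beta)$, and a direct computation confirms $w=1$). Writing $R$ locally as $g(0)\bigl(1-\epsilon e^{2\pi\ii t}/z\bigr)$ near the collision and rescaling shows the loop rotates a localised, degree-one lump of the field through a full $2\pi$, which is the standard generator of $\pi_1(S_N)$~\cite{FinkelsteinRubinstein1968}; in the base case $N=1$ one has $V_1\cong\mathrm{GL}_2(\C)$ and $\Res=\det$, and the meridian is exactly the $2\pi$ isorotation $R\mapsto e^{2\pi\ii t}R$ generating $\pi_1(S_1)$. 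Hence $\sigma(\mu)=-1$ and $\sigma=(-1)^{w}$.

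I expect the last step to be the main obstacle: turning the local-model picture into a rigorous statement that the meridian is homotopic in $S_N$ to a $2\pi$ rotation of a single constituent. The natural route is an excision argument confining all $t$-dependence to a small neighbourhood of the colliding zero--pole pair, so that the field is homotopically constant elsewhere and the neighbourhood matches the degree-one generator; alternatively, it suffices to invoke any previously computed rational-map loop with Finkelstein--Rubinstein sign $-1$ (e.g.\ from \cite{krusch2003homotopy}), which by the cyclic-group argument already forces $\sigma(\mu)=-1$. Everything else is formal once the irreducibility of the resultant is in hand.
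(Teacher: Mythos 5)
Your argument is correct in outline but follows a genuinely different route from the paper. The paper works in the quotient space $M_N$ of rational maps, imports the known facts $\pi_1(M_N^\ast)\cong\Z$, $\pi_1(M_N)\cong\Z_{2N}$ together with Krusch's computation of $\pi_1(M_N^\ast)\to\pi_1(S_N)$, and then proves commutativity of the diagram \eqref{cd1} by comparing the homotopy exact sequences of two coverings built from the normalisation $\Res(p,q)=1$ (diagram \eqref{fibrations}); the output is the sharper congruence $f_\ast=-\Res_\ast\bmod 2N$. You instead stay entirely in the space of coefficient pairs (your $V_N$ coincides with the paper's $P_N$, since vanishing of both leading coefficients kills the Sylvester determinant), prove from scratch via the incidence variety over $\CP^1$ that the resultant hypersurface is irreducible with $\Res$ as its reduced equation, and conclude $H_1(P_N)\cong\Z$ with the winding number of $\Res$ as a complete invariant; since both $\sigma$ and $(-1)^{w}$ factor through $H_1(P_N)\to\Z_2$, equality reduces to the nontriviality of $\sigma$. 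That last step is the one place where your primary argument (identifying the meridian with a $2\pi$ rotation of a single constituent) is heuristic and would need the excision argument you acknowledge, but your stated fallback is sound: $P_N\to M_N$ is a $\C^\ast$-fibration, so $\pi_1(P_N)\to\pi_1(M_N)$ is surjective, and Krusch's theorem --- the same external input the paper relies on --- supplies a loop of rational maps with sign $-1$ for every $N$. What your approach buys is independence from the computation $\pi_1(M_N)\cong\Z_{2N}$ and from the fibration diagram, at the cost of the algebro-geometric input (irreducibility of the resultant and the description of $H_1$ of an irreducible hypersurface complement); what it loses is the refined mod-$2N$ relation between $f_\ast$ and $\Res_\ast$, which the paper obtains along the way.
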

\begin{proof}
Denote by $M_N$ the space of rational maps $R$ of degree $N$, and by $M_N^\ast$ the space of based rational maps satisfying $R(\infty)=\infty$.  It is known that $\pi_1(M_N^\ast)\cong\mathbb{Z}$ and that $\pi_1(M_N)\cong\mathbb{Z}_{2N}$, and that the map $\pi_1(M_N^\ast)\to\pi_1(M_N)$ induced by the inclusion $M_N^\ast\to M_N$ is given by $n\mapsto n\mod 2N$ \cite{Epshtein1973,Segal1979}.

The rational map approximation is a map $M_N\to S_N$.  Krusch has shown that the composition $\pi_1(M_N^\ast)\to\pi_1(M_N)\to\pi_1(S_N)$ induced by the rational map construction is given by $n\mapsto (-1)^n$.  So the map $\pi_1(M_N)\to\pi_1(S_N)\cong \mathbb{Z}_2$ must also be given by $n\mapsto (-1)^n$, where $0\leq n<2N$ represents an element of $\mathbb{Z}_{2N}$.

Now let $P_N$ denote the space of pairs $(p,q)$ of polynomials with $\mathrm{Res}(p,q)\neq0$ and $\max\{\deg(p),\deg(q)\}=N$.  There are natural maps $f_\ast:\pi_1(P_N)\to\pi_1(M_N)\cong\mathbb{Z}_{2N}$ and $\mathrm{Res}_\ast:\pi_1(P_N)\to\pi_1(\mathbb{C}^\ast)\cong\mathbb{Z}$ induced by the projection $f:(p,q)\mapsto R=p/q$ and the resultant $\mathrm{Res}:P_N\to\mathbb{C}^\ast$.  Our preceding comments show that the left hand side of the identity \eqref{rat map sign} is equal to $(-1)^{f_\ast(p^t,q^t)}$, while the right hand side is $(-1)^{\mathrm{Res}_\ast(p^t,q^t)}$, with $(p^t,q^t)$ being a representative loop in $P_N$ of an element of $\pi_1(P_N)$.  To prove the theorem it suffices to show that these two are equal, in other words that the diagram
\begin{equation}\label{cd1}
\begin{tikzcd}
\pi_1(P_N) \arrow[r,"f_\ast"] \arrow[d,"\mathrm{Res}_\ast"] & 
\pi_1(M_N)\cong\mathbb{Z}_{2N} \arrow[d] \\
\pi_1(\mathbb{C}^\ast)\cong\mathbb{Z} \arrow[r] & 
\mathbb{Z}_2
\end{tikzcd}
\end{equation}
commutes.

Now let $P_N^1\subset P_N$ denote the subset of pairs of polynomials with $\mathrm{Res}(p,q)=1$, and consider the following commuting diagram of fibrations.
\begin{equation}\label{fibrations}
\begin{tikzcd}
\mathbb{Z}_{2N}\arrow[r]\arrow[d] & 
\mathbb{C}^\ast\times P^1_N\arrow[r]\arrow[d] & 
P_N \arrow[d,"\mathrm{Res}\times f"] \\
\mathbb{Z}_{2N}\times\mathbb{Z}_{2N}\arrow[r] &
\mathbb{C}^\ast\times P^1_N\arrow[r] & 
\mathbb{C}^\ast\times M_N
\end{tikzcd}
\end{equation}
The map $\mathbb{C}^\ast\times P^1_N\to P_N$ is given by $(a,(p,q))\mapsto (ap,aq)$; this is clearly a $\mathbb{Z}_{2N}$-fibration with fibres consisting of $(e^{-in\pi/N} a,(e^{in\pi/N} p,e^{in\pi/N} q))$ for $n\in\mathbb{Z}_{2N}$.  The central vertical map is the identity.  The map $\mathbb{C}^\ast\times P^1_N\to \mathbb{C}^\ast \times M_N$ is given by $(a,(p,q))\mapsto (a^{2N},p/q)$, which is clearly a $\mathbb{Z}_{2N}\times\mathbb{Z}_{2N}$ fibration with fibres $(e^{im\pi/N} a,(e^{in\pi/N} p,e^{in\pi/N} q))$.  Then the right-most square commutes because $\mathrm{Res}(ap,aq)=a^{2N}\mathrm{Res}(p,q)$.  The left-most vertical map is clearly $n\mapsto (-n,n)$.

From the homotopy exact sequences of these two fibrations we obtain the following commuting diagram.
\begin{equation}
\begin{tikzcd}
\pi_1(P_N) \arrow[r]\arrow[d,"\mathrm{Res}_\ast\times f_\ast"] & 
\pi_0(\mathbb{Z}_{2N})  \arrow[d] \\
\pi_1(\mathbb{C}^\ast)\times \pi_1(M_N) \arrow[r]&
\pi_0(\mathbb{Z}_{2N})\times\pi_0(\mathbb{Z}_{2N})
\end{tikzcd}
\end{equation}
The lower horizontal arrow is a product of two maps.  The first comes from the fibration $\mathbb{Z}_{2N}\to \mathbb{C}^\ast\to\mathbb{C}^\ast$ and is given by $\pi_1(\mathbb{C}^\ast)\cong\mathbb{Z}\to \mathbb{Z}_{2N}$, $n\mapsto n\mod 2N$.  The second is the map $\mathbb{Z}_{2N}\cong\pi_1(M_N)\to \pi_0(\mathbb{Z}_{2N})$ which arises from the fibration $\mathbb{Z}_{2N}\to P_N^1\to M_N$; this is known to be an isomorphism \cite{Epshtein1973}.  The image of the right-most vertical arrow consists of $(-n,n)$ for $n\in\mathbb{Z}_{2N}$.  Since the diagram commutes, it follows that
\begin{equation}
f_\ast(p^t,q^t)= -\mathrm{Res}_\ast(p^t,q^t) \mod{2N}
\end{equation}
for all loops $(p^t,q^t)$ in $P_N$.  This equality modulo $2N$ implies equality modulo $2$, and hence the diagram \eqref{cd1} commutes as claimed.
\end{proof}
\section{Examples}
In this section we demonstrate the methods derived above in some examples.

Numerical studies have shown that the minimal-energy $N=3$ Skyrme field has tetrahedral symmetry.  The quantisation of the 3-skyrmion presented in \cite{walet1996quantising} used a path in the $N=3$ configuration space of Skyrme fields which starts at this tetrahedron, passes through a torus, and ends at the dual tetrahedron. The dual tetrahedron may then be rotated and isorotated to go back to the original tetrahedron, forming a loop in the configuration space $S_3$. Here we construct this path using both ADHM data and rational maps, and compute the corresponding Finkelstein--Rubinstein sign.

Consider the following one-parameter family of ADHM data
\begin{align}\label{D2d-data}
\begin{aligned}
    L(t)&=\begin{pmatrix}
        \bi&\bj&t\bk
    \end{pmatrix},&
    M(t)&=\begin{pmatrix}
    0&t\bk&\bj\\
    t\bk&0&\bi\\
    \bj&\bi&0\end{pmatrix}.
\end{aligned}
\end{align}
The data has $D_{2d}$ symmetry for all $t\in\R$. At $t=\pm1$ the data has tetrahedral symmetry $T_d$, and at $t=0$ the data has toroidal symmetry $D_{\infty h}$.  So varying $t$ between $-1$ and $1$ and applying the Atiyah--Manton construction reproduces the path of Skyrme fields found in \cite{walet1996quantising}. Charge density isosurface plots of this path at $t=-1,-0.5,0,0.5,1$ are given in Figure \ref{fig:D2d-path}.
\begin{figure}[ht]
    \centering
    \includegraphics[scale=0.5,keepaspectratio=true]{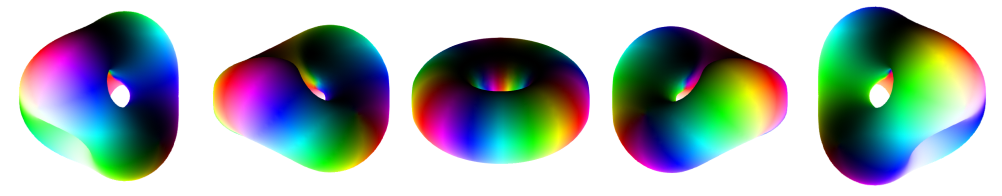}
    \caption{The path in the $N=3$ configuration space generated by the data \eqref{D2d-data}.}
    \label{fig:D2d-path}
\end{figure}

The data at $t=-1$ and $t=1$ are related as follows:
\begin{align*}
    \begin{aligned}
        L(1)&=qL(-1)q^{-1}P^{-1},&M(1)&=qP M(-1)P^{-1}q^{-1},
    \end{aligned}
\end{align*}
where $q=\tfrac{\sqrt{2}}{2}(\bu+\bk)$ and $P=\begin{pmatrix}
    0&-1&0\\
    1&0&0\\
    0&0&-1
\end{pmatrix}$. In Skyrme language this means the tetrahedra are related by a rotation of space about the $x_3$-axis through angle $\tfrac{\pi}{2}$ and an isorotation of the target 3-sphere, also through angle $\tfrac{\pi}{2}$. Thus, the path formed by combining the path from $t=-1$ to $t=1$ with this rotation-isorotation pair is a loop in configuration space. Since $\det P=-1$, the Finkelstein--Rubinstein sign is $-1$. In other words, the loop is non-contractible.

We can consider the same path of $N=3$ Skyrme fields using rational maps.  The family corresponding to \eqref{D2d-data} is
\begin{equation}
R(z)=\frac{\sqrt{3}itz^2-1}{z^3-\sqrt{3}tiz}
\end{equation}
These satisfy $R(-z)=-R(z)$, $R(1/z)=1/R(z)$, and $R(i\bar{z})=i\overline{R(z)}$, so they have $D_{2d}$ symmetry.  The maps with $t=\pm1$ have tetrahedral $T_d$ symmetry, and the maps with $t=0$ have axial symmetry about the $x_3$-axis.  The tetrahedral map with $t=-1$ is related to the map with $t=1$ by a transformation
\begin{equation}
R(z)\mapsto -i R(iz).
\end{equation}
As before this is a combined rotation and isorotation through angles of $\frac{\pi}{2}$ about the $x_3$-axis.

The path obtained by varying $t$ from $1$ to $-1$ and then rotating and isorotating through angle $\tfrac{\pi}{2}$ is a closed loop in the space of rational maps.  To calculate its Finkelstein--Rubinstein sign, we calculate the resultant of the map
\begin{equation}
e^{-i\theta}R(e^{i\theta}z)=e^{-4i\theta}\frac{e^{2i\theta}\sqrt{3}itz^2-1}{z^3-e^{-2i\theta}\sqrt{3}itz}.
\end{equation}
The resultant turns out to be
\begin{equation}
e^{-12i\theta}\begin{vmatrix}
-1 & 0 & e^{2i\theta}\sqrt{3}it & 0 & 0 & 0 \\
0 & -1 & 0 & e^{2i\theta}\sqrt{3}it & 0 & 0 \\
0 & 0 & -1 & 0 & e^{2i\theta}\sqrt{3}it & 0 \\
0 & -e^{-2i\theta}\sqrt{3}it & 0 & 1 & 0 & 0 \\
0 & 0 & -e^{-2i\theta}\sqrt{3}it & 0 & 1 & 0 \\
0 & 0 & 0 & -e^{-2i\theta}\sqrt{3}it & 0 & 1 \\
\end{vmatrix}
=-e^{-12i\theta}(1+3t^2)^2.
\end{equation}
As $t$ varies from $1$ to $-1$ and $\theta$ varies from $0$ to $\frac{\pi}{2}$ this circles the origin three times.  So the winding number is $3$ and the Finkelstein--Rubinstein sign is once again $(-1)^3=-1$.

As a final example, we consider paths induced by symmetries of rational maps, which were studied earlier in \cite{krusch2003homotopy}.  Suppose that a rational map $R$ is invariant under a simultaneous rotation of its domain through angle $\alpha$ and isorotation of its target through angle $\beta$.  Without loss of generality we may assume that both rotations are about the $x_3$-axis and hence that
\begin{equation}
R(z)=e^{i\beta}R(e^{-i\alpha}z).
\end{equation}
The left and right hand sides of this equation are the start and end points of the following path in the space of rational maps:
\begin{equation}
\begin{aligned}
p^t(z) &= e^{it\beta}(p_0+p_1e^{-it\alpha}z+\ldots+p_Ne^{-iNt\alpha}z^N)\\
q^t(z) &= (q_0+q_1e^{-it\alpha}z+\ldots+q_Ne^{-iNt\alpha}z^N).
\end{aligned}
\end{equation}
Recall that the resultant of $p$ and $q$ is the determinant of a matrix $T(p,q)$ given in \eqref{resultant}.  We find that
\begin{align}
M(p^t,q^t)&=D_1 T(p^0,q^0)D_2^{-1}\\
D_1&=\mathrm{diag}(e^{it(\alpha+\beta)},e^{it(2\alpha+\beta)},e^{it(N\alpha+\beta)},e^{it\alpha},e^{2it\alpha},\ldots,e^{Nit\alpha})\\
D_2&=\mathrm{diag}(e^{it\alpha},e^{2it\alpha},\ldots,e^{2Nit\alpha}).
\end{align}
Hence
\begin{equation}
\mathrm{Res}(p^t,q^t)=\det(D_1)\det(D_2^{-1})\mathrm{Res}(p^0,q^0)=e^{itN(\beta-N\alpha)}\mathrm{Res}(p^0,q^0).
\end{equation}
The Finkelstein--Rubinstein sign is calculated from the winding number of this path in $\mathbb{C}^\ast$ and is given by
\begin{equation}
(-1)^{N(\beta-N\alpha)/2\pi},
\end{equation}
in agreement with \cite{krusch2003homotopy}.
\section{Conclusion and outlook}
We have established two new methods for computing Finkelstein--Rubinstein signs within the widely-used constructions of Skyrme fields: instantons and rational maps. Both methods are simple to use and can be applied to any loop; in particular, their use is not restricted to loops generated by symmetries.  For symmetry-generated loops within the rational map approximation, we have obtained a simple derivation of the method introduced in \cite{krusch2003homotopy} as a special case of our more general method. The methods developed here will prove useful in ongoing efforts to quantise skyrmions using paths of Skyrme fields \cite{corkhalcrow2022adhm,SpeightWinyard2023nudged}.
\subsection*{Acknowledgements} We thank Chris Halcrow for useful suggestions and encouragement. Figure \ref{fig:D2d-path} was generated using the Skyrmions3D software developed by Chris Halcrow.
\bibliographystyle{plain}
\bibliography{refs}

\begin{thebibliography}{10}

\bibitem{ADHM1978construction}
M~F Atiyah, V~G Drinfeld, N~J Hitchin, and Y~I Manin.
\newblock Construction of instantons.
\newblock {\em Phys. Lett. A}, 65(3):185--187, 1978.

\bibitem{AtiyahManton1989}
M~F Atiyah and N~S Manton.
\newblock Skyrmions from instantons.
\newblock {\em Phys. Lett. B}, 222(3):438--442, 1989.

\bibitem{BattyeMantonSutcliffeWood2009light}
R~A Battye, N~S Manton, P~M Sutcliffe, and S~W Wood.
\newblock {Light nuclei of even mass number in the Skyrme model}.
\newblock {\em Phys. Rev. C}, 80(3):034323, 2009.

\bibitem{christWeinbergStanton1978general}
N~H Christ, E~J Weinberg, and N~K Stanton.
\newblock General self-dual {Yang-Mills} solutions.
\newblock {\em Phys. Rev. D}, 18(6):2013, 1978.

\bibitem{corkhalcrow2022adhm}
J~Cork and C~Halcrow.
\newblock {ADHM} skyrmions.
\newblock {\em Nonlinearity}, 35(8):3944, 2022.

\bibitem{Epshtein1973}
S~I \'{E}pshtein.
\newblock Fundamental groups of spaces of coprime polynomials.
\newblock {\em Funct Anal Its Appl}, 7:82--83, 1973.

\bibitem{FinkelsteinRubinstein1968}
D~Finkelstein and J~Rubinstein.
\newblock Connection between spin, statistics, and kinks.
\newblock {\em J. Math. Phys.}, 9(11):1762--1779, 1968.

\bibitem{GillardHarlandKirkElliotMaybeeSpeight2017}
M~Gillard, D~Harland, E~Kirk, B~Maybee, and M~Speight.
\newblock {A point particle model of lightly bound skyrmions}.
\newblock {\em Nucl. Phys. B}, 917:286--316, 2017.

\bibitem{Gudnason:2018aej}
S~B Gudnason and C~Halcrow.
\newblock {$B=5$ Skyrmion as a two-cluster system}.
\newblock {\em Phys. Rev. D}, 97(12):125004, 2018.

\bibitem{Halcrow:2022bxw}
C~Halcrow and D~Harland.
\newblock {Nucleon-nucleon potential from instanton holonomies}.
\newblock {\em Phys. Rev. D}, 106(9):094011, 2022.

\bibitem{Halcrow:2015rvz}
C~J Halcrow.
\newblock {Vibrational quantisation of the B = 7 Skyrmion}.
\newblock {\em Nucl. Phys. B}, 904:106--123, 2016.

\bibitem{Halcrow:2016spb}
C~J Halcrow, C~King, and N~S Manton.
\newblock {A dynamical $\alpha$-cluster model of $^{16}$O}.
\newblock {\em Phys. Rev. C}, 95(3):031303, 2017.

\bibitem{HoughtonMantonSutcliffe1998rational}
C~J Houghton, N~S Manton, and P~M Sutcliffe.
\newblock Rational maps, monopoles and skyrmions.
\newblock {\em Nucl. Phys. B}, 510(3):507--537, 1998.

\bibitem{hurtubise1986instantons}
J~Hurtubise.
\newblock Instantons and jumping lines.
\newblock {\em Commun. Math. Phys.}, 105(1):107--122, 1986.

\bibitem{krusch2003homotopy}
S~Krusch.
\newblock Homotopy of rational maps and the quantization of skyrmions.
\newblock {\em Ann. Phys.}, 304(2):103--127, 2003.

\bibitem{krusch2006finkelstein}
S~Krusch.
\newblock {Finkelstein--Rubinstein} constraints for the {Skyrme} model with
  pion masses.
\newblock {\em Proc. R. Soc. Lond. A}, 462(2071):2001--2016, 2006.

\bibitem{LauManton2014}
P~H~C Lau and N~S Manton.
\newblock {Quantization of $T_d$- and $O_h$-symmetric Skyrmions}.
\newblock {\em Phys. Rev. D}, 89(12):125012, 2014.

\bibitem{Leese:1994hb}
R~A Leese, N~S Manton, and B~J Schroers.
\newblock {Attractive channel skyrmions and the deuteron}.
\newblock {\em Nucl. Phys. B}, 442:228--267, 1995.

\bibitem{mankomantonwood2007}
O~V Manko, N~S Manton, and S~W Wood.
\newblock Light nuclei as quantized skyrmions.
\newblock {\em Phys. Rev. C}, 76(5):055203, 2007.

\bibitem{MantonSutcliffe2004}
N~S Manton and P~M Sutcliffe.
\newblock {\em Topological solitons}.
\newblock Cambridge University Press, 2004.

\bibitem{MantonWood2006lithium}
N~S Manton and S~W Wood.
\newblock {Reparametrizing the Skyrme model using the lithium-6 nucleus}.
\newblock {\em Phys. Rev. D}, 74(12):125017, 2006.

\bibitem{Rawlinson:2017rcq}
J~I Rawlinson.
\newblock {An alpha particle model for carbon-12}.
\newblock {\em Nucl. Phys. A}, 975:122--135, 2018.

\bibitem{Segal1979}
G~Segal.
\newblock Topology of spaces of rational functions.
\newblock {\em Acta Math.}, 143(1-2):39--72, 1979.

\bibitem{skyrme1962nucl}
T~H~R Skyrme.
\newblock A unified field theory of mesons and baryons.
\newblock {\em Nucl. Phys.}, 31:556, 1962.

\bibitem{SpeightWinyard2023nudged}
J~M Speight and T~Winyard.
\newblock Nudged elastic bands and lightly bound skyrmions.
\newblock {\em SIGMA}, 19(073), 2023.

\bibitem{walet1996quantising}
N~R Walet.
\newblock Quantising the {$B= 2$} and {$B= 3$ Skyrmion} systems.
\newblock {\em Nucl. Phys. A}, 606(3-4):429--458, 1996.

\bibitem{Witten1983baryons}
E~Witten.
\newblock {Current algebra, baryons, and quark confinement}.
\newblock {\em Nucl. Phys. B}, 223:433--444, 1983.

\end{thebibliography}
\end{document}